% PLA rerevised

\documentclass[12pt]{article}

\usepackage{epsfig}
\usepackage{amssymb}
\usepackage{amsfonts}

\usepackage{color}

%%%%
 
%
 %%%
     \setlength{\headheight}{-10pt}
 %%% \setlength{\headheight}{-10pt}

%%%
  \setlength{\headsep}{-10pt}

 %%%
  \setlength{\textwidth}{460pt}

%\usepackage[notref,notcite]{showkeys}
  % show labels (testing phase)

%        \usepackage[notref]{showkeys}
  % show labels (testing phase)

%   \usepackage{showkeys}      % show labels (testing phase)
  %
  %
 %%%%%
    \setlength{\textheight}{660pt}

\setlength{\oddsidemargin}{10pt}

\def\be{\begin{equation}}
\def\ee{\end{equation}}
\def\ba{\begin{array}{c}}
\def\ea{\end{array}}

\def\ben{$$}
\def\een{$$}

\newcommand{\bea}{\begin{eqnarray}}
\newcommand{\eea}{\end{eqnarray}}

\newtheorem{thm}{Theorem}

\newtheorem{lemma}[thm]{Lemma}

\newenvironment{proof}{\noindent {\bf Proof}}{\hfill$\square$\vspace{3mm}\endtrivlist}

%   \textcolor{red}{...}

\begin{document}

\titlepage

%\vspace{.35cm}

 \begin{center}{\Large \bf

Complex tridiagonal quantum Hamiltonians and
matrix continued fractions

  }\end{center}

%\vspace{10mm} and singular values of resonances

 \begin{center}

\vspace{8mm}

  {\bf Miloslav Znojil} $^{1,2}$

\end{center}

\vspace{8mm}

  $^{1}$
 {The Czech Academy of Sciences,
 Nuclear Physics Institute,
 Hlavn\'{\i} 130,
250 68 \v{R}e\v{z}, Czech Republic, {e-mail: znojil@ujf.cas.cz}}

 %\footnote{{e-mail: znojil@ujf.cas.cz}

 $^{2}$
 {Department of Physics, Faculty of
Science, University of Hradec Kr\'{a}lov\'{e}, Rokitansk\'{e}ho 62,
50003 Hradec Kr\'{a}lov\'{e},
 Czech Republic}

%\newpage

\section*{Abstract}

Quantum resonances described by
non-Hermitian tridiagonal-matrix Hamiltonians $H$
with complex energy eigenvalues 
are considered.
The
possibility is analyzed of the evaluation of
quantities $\sigma_n$
known as the
singular values
of
$H$. What is constructed are
self-adjoint block-tridiagonal
operators $\mathbb{H}$ (with eigenvalues $\sigma_n$)
and their resolvents (defined
in terms of a
matrix continued fraction,
MCF). In an illustrative
application of the formalism
to the discrete version of conventional $H=-d^2/dx^2+V(x)$
with complex local $V(x) \neq V^*(x)$,
the numerical MCF convergence is found quick and,
moreover, supported also by a
fixed-point-based formal proof.

\subsection*{Keywords}.

non-Hermitian complex-symmetric tridiagonal-matrix  Hamiltonians $H$;

singular values $\sigma_n \in \mathbb{R}$ of resonances in open quantum systems;

$\sigma_n$s as eigenvalues of a block-tridiagonal Hermitian partner $\mathbb{H}$ of $H$;

matrix continued fraction resolvent of $\mathbb{H}$;

analytic proof of convergence;

\newpage

\section{Introduction}

Physical quantum systems described by
tridiagonal Hamiltonians
 \be
 H=
  \left[ \begin {array}{ccccc}
     a_1&b_1&0
 &\ldots&0
   \\
   c_2&a_2&b_2&\ddots
 &\vdots
   \\
 0
 &\ddots&\ddots&\ddots&0
   \\
 \vdots&\ddots&c_{N-1}&a_{N-1}&b_{N-1}
    \\
  0&\ldots&0&c_{N}&a_{N}
    \\
 \end {array} \right]\,
 %,
% \ \ \ \ a_k \,,b_k\,, c_{k+1}\ \in {\mathbb C}
 \label{ufinkit}
 \ee
with $N \leq \infty$ and complex $a_k$, $b_k$ and $c_{k+1}$
(cf., e.g., the recent collection \cite{Marcelo} of examples)
can be separated, roughly speaking,
into two subcategories
in which the
Hamiltonian is assumed
Hermitian (or
at least Hermitizable {\it alias\,} quasi-Hermitian \cite{Geyer}),
or not.

In the former case
the spectrum is real and
its interpretation is standard
(see \cite{Messiah} or any other textbook on quantum mechanics).
Even when our
Hamiltonian (\ref{ufinkit}) is non-Hermitian
but Hermitizable,
multiple comprehensive reviews
of the comparatively minor
necessary modification of the theory
are already
available
(cf, e.g., \cite{Carl,ali,book}).

In the other, Hermiticity violating
cases with complex spectra (which are of our present interest)
the evolution
generated by Hamiltonian (\ref{ufinkit})
is non-unitary.
We are forced to speak, typically,
about quantum mechanics of
resonances and/or about open
quantum systems
which are, by definition, exposed to the influence of
an environment
\cite{Nimrod}.

The latter form of quantum theory is, arguably
(i.e., still under vivid discussion \cite{Nimrodb,Nimrodc}), more realistic
since in it,
a key role is played by
unstable states. They are characterized by the
energy eigenvalues which are not real,
$E_n \in \mathbb{C}$.
Their experimental localization
requires, therefore, subtle and sophisticated
techniques.
Complications also arise in the theory
because the relevant complex eigenvalues have to be
deduced from
a suitable
non-Hermitian
effective form of Hamiltonian.

Whenever one reveals that the energy is not real,
an important partial characteristics of the state
can be provided by the real and non-negative auxiliary
quantity
 $
 \sigma_n $
called singular value~\cite{SV}.
In what follows,
we are going to pay attention to a few technical aspects of
the determination of these quantities.

In the literature, the study of singular values
is mainly being developed
in the framework of pure mathematics.
In this setting,
several intimate connections
between complex  $E_n \in \mathbb{C}$
and real~$\sigma_n \in \mathbb{R}$
have recently been revealed and described by
Pushnitski with \v{S}tampach \cite{PS1}.
We found these results inspiring.
Unfortunately,
they were
based on a number of formal assumptions
(including, typically, the assumption of the boundedness
of the operators)
which would rather severely restrict their
applicability to
the description of
quantum resonances.
Thus, we decided to weaken some of the assumptions.
We revealed that
for a class of operators (\ref{ufinkit})
exhibiting symmetry
with respect to
transposition (cf. \cite{Garcia,Garciab,gar-put}
for dedicated reviews),
a
fairly nontrivial formal
correspondence
between the constructions of
$E_n $ and $\sigma_n$
still survives.

%\newpage

\section{The plan of the paper}

The presentation of our results will be preceded by
a preparatory section
\ref{ch2} in which
a purely formal appeal of an arbitrary tridiagonal-matrix
Schr\"{o}dinger operator with
Hamiltonian of Eq.~(\ref{ufinkit})
(with $N<\infty$ or $N=\infty$)
will be shown to lie in the possibility
of its factorization
in terms of analytic continued fractions.
In subsections \ref{kuskusi} and \ref{ch3} we will
recall several
analytic and numerical applications of this
idea.
We will point out that in both of these cases
the purpose of the factorization is a efficient
identification of the energy levels with the
poles of a continued-fractional
resolvent {\it alias\,} Green's function.
In the third subsection \ref{ius}
it is finally explained that
a mathematical key to the success
lies in the efficiency of fixed-point-based
proof of its convergence.

The core of our message is then presented in section \ref{ch4}.
We will address there several technical complications
which emerge due to the fact that the spectrum of
our effective Hamiltonian $H$ of Eq.~(\ref{ufinkit}) is complex.
In our paper we propose
two sources of simplification. Besides the above-mentioned
reduction of interest from the (complex) energies
$E_n$ to the mere (real) singular values $\sigma_n$,
we will also make use of an idea of
Pushnitski and \v{S}tampach~\cite{PS1}
and we will treat these singular values
as eigenvalues of
a certain auxiliary
block-tridiagonal
descendant $\mathbb{H}$ of $H$.
In subsection \ref{houska}
this will enable us to
replace the above-outlined universal
procedure of a rather difficult
analytic-continued-fraction
factorization
of the manifestly non-Hermitian resolvent $R(z)=(H-z)^{-1}$
with the complex poles at $z=E_n$
by an innovative simplified
procedure of an analogous
matrix-continued-fraction (MCF)
factorization
of the associated resolvent
$\mathbb{R}(z)=(\mathbb{H}-z)^{-1}$
with the real and positive poles at
the physical singular values of $H$.

It will be emphasized that
a key merit of
transition from $H$ to
$\mathbb{H}$ is that
the latter operator
is strictly Hermitian.
For this reason,
all of the
phenomenologically relevant
poles of the associated
resolvent
$\mathbb{R}(z)$
have to be sought on the real half-axis.
Naturally, the price to pay is the mere
block-tridiagonality of
$\mathbb{H}$ opening the
less elementary
question of
the MCF convergence. This problem will be
discussed and resolved in subsection \ref{umius}.
Finally, for illustration, several sets of parameters will be
chosen to show, in subsection \ref{ch5},
the efficiency of the MCF method
as well as the existence of the limitations
of its applicability.

A few comments and conclusions will be finally added
in section \ref{ch6}.

\section{Factorization of resolvents\label{ch2}}

In an overall framework of quantum mechanics one of the
key formal
advantages of the tridiagonality of Hamiltonians (\ref{ufinkit})
can be seen in the possibility of the following form of
factorization
of Schr\"{o}dinger operators,
 \be
 H-E
 %=
  %\left[ \begin {array}{ccccc}
%     a_1-E&b_1&0
% &\ldots&0
%   \\
%   c_2&a_2-E&b_2&\ddots
% &\vdots
%   \\
% 0
% &\ddots&\ddots&\ddots&0
%   \\
% \vdots&\ddots&c_{N-1}&a_{N-1}-E&b_{N-1}
%    \\
%  0&\ldots&0&c_{N}&a_{N}-E
%    \\
% \end {array} \right]
= {\cal U}\,
 {\cal F}\,
 {\cal L}\,\,
 \label{finkit}
 \ee
where the middle factor ${\cal F}$ is a diagonal matrix
with elements
 $$
 1/f_1,  1/f_2, \ldots,  1/f_N\,.
 $$
These elements have to be defined by recurrences
 \be
 f_k=\frac{1}{a_k-E-b_kf_{k+1}c_{k+1}}\,,\ \ \ \
 k=N, N-1,\ldots,2 ,1 %,(and \  0)
 \label{cf}
 \ee
in which we set, formally, $f_{N+1}=0$.
The factorization then becomes an identity when we
set
 \be
 {\cal U}=
  \left[ \begin {array}{ccccc}
  1&b_1f_2&0
 &\ldots&0
   \\
     0&1&b_2f_3&\ddots
 &\vdots
   \\
 0
 &0&\ddots&\ddots&0
   \\
 \vdots&\ddots&\ddots&1&b_{N-1}f_N
    \\
  0&\ldots&0&0&1
    \\
 \end {array} \right]\,,
 \ \ \ \ \ \
 {\cal L}=
  \left[ \begin {array}{ccccc}
  1&0&0
 &\ldots&0
   \\
     f_2c_2&1&0
 &\ldots&0
   \\
   0&f_3c_3&\ddots&\ddots
 &\vdots
   \\
 \vdots&\ddots
 &\ddots&1&0
   \\
  0&\ldots&0&f_Nc_{N}&1
    \\
 \end {array} \right]\,.
 \label{lowkit}
 \ee
In the case of operators with $N = \infty$
the factorization~(\ref{finkit}) becomes formally defined
in terms of
recurrences (\ref{cf}) and
quantities $f_k=f_k(E)$ called analytic continued fractions
\cite{Akhiezer}.
Naturally, these continued fractions must be convergent;
otherwise, the factorization (\ref{finkit}) would not exist of course.

\subsection{Characteristic application: Green's function\label{kuskusi}}

The idea of factorization
can immediately be extended
to the evaluation of resolvents \cite{Haydock},
 \be
 \frac{1}
 {H-E}=
 {\cal L}^{-1}\,
 {\cal F}^{-1}\,
 {\cal U}^{-1}\,.
 \label{[9h]}
 \ee
The construction is almost equally explicit since
 \ben
 {\cal U}^{-1}=
  \left[ \begin {array}{ccccc}
  1&{u}_2&{u}_2{u}_3
 &\ldots&{u}_2{u}_3\ldots{u}_N
   \\
     0&1&{u}_3&\ddots
 &\vdots
   \\
 0
 &0&\ddots&\ddots&{u}_{N-1}{u}_N
   \\
 \vdots&\ddots&\ddots&1&{u}_N
    \\
  0&\ldots&0&0&1
    \\
 \end {array} \right]\,,
 \ \ \
 {\cal L}^{-1}=
  \left[ \begin {array}{ccccc}
  1&0&0
 &\ldots&0
   \\
     {v}_2&1&0
 &\ldots&0
   \\
   {v}_3{v}_2&{v}_3&\ddots&\ddots
 &\vdots
   \\
 \vdots&\ddots
 &\ddots&1&0
   \\
   {v}_N \ldots {v}_3{v}_2&\ldots&{v}_N{v}_{N-1}&{v}_N&1
    \\
 \end {array} \right]\,
 %\label{inlowkit}
 \een
where we only abbreviated
${u}_{k+1}=-b_kf_{k+1}$ and ${v}_j=-c_jf_j$.

In many
physics-oriented applications
of formula (\ref{[9h]})
with $N=\infty$
one often needs to evaluate just the
so called analytic Green's function
$f_1(z)=[1/(H-z)]_{11}$ where $z$ (not in the spectrum of $H$)
is just a suitable complex
parameter.
For illustration, {\it pars pro toto},
we could cite paper \cite{Singh}
in which,
in a broader methodical context of quantum field theory,
the (real) poles of function $f_1=f_1(E)$
were shown to coincide with the
energy spectrum of the quantum-mechanical bound
states in the sextic-anharmonic-oscillator
local potential $V(x)=Ax^2+Bx^4+Cx^6$.

In {\it loc. cit.},
serendipitously,
Singh et al were probably also the first
physicists who discovered, as a byproduct
of their analysis, the so called
quasi-exact solvability property of their Hamiltonian-representing
toy-model (\ref{finkit}):
A more extensive account of the
details of this phenomenon
(related to the continued-fraction termination
due to an accidental disappearance of element $c_{k_0+1}=0$
at some anomalous coupling constants and index $k_0$)
can be
found in dedicated monograph \cite{Ushveridze}.
Still,
in the generic, non-terminating cases with $N=\infty$,
a key to the
consistency of all of the similar
results
appeared to lie in the proof of
convergence of the underlying analytic
continued fractions (cf., also \cite{FP}).

\subsection{Discrete complex local confining-potential example\label{ch3}}

In the context of purely numerical calculations in quantum physics,
the tridiagonal format of Hamiltonians (\ref{ufinkit})
often finds its origin
in the most common ordinary differential operator
 \be
 H= -\frac{d^2}{dr^2}+V(r)\,,\ \ \ \ r \in (0,\infty)
 \label{kukish}
 \ee
acting, say, in the most common Hilbert space $L^{2}(0,\infty)$.
Indeed, after
one replaces the half-line of
the continuous coordinates $r \in (0,\infty)$
by a discrete (though not necessarily equidistant) lattice
of grid points $r_k$ with
$k=0,1,\ldots,N,N+1$ \cite{grid},
the
related discretization of the
original kinetic-energy term $ -d^2/dr^2$ becomes
responsible for the emergence of real off-diagonal
matrix elements in (\ref{ufinkit}).
In the equidistant-lattice case one can even
choose $b_k=c_{k+1}=\alpha_k=1$ at all $k$.

Once we insist on keeping
the potential strictly local (albeit complex),
we only have a freedom in the choice of
the diagonal matrix elements in a grid-point-dependent
(i.e., in a subscript-dependent) form of
$V(r_k)=a_k = \beta_k+i\gamma_k$.
Our attention becomes restricted to the
complex and symmetric
$N=\infty$ matrix Hamiltonians
 \be
 H=
  \left[ \begin {array}{cccc}
     \beta_1+i\gamma_1&\alpha_1&0
 &\ldots
   \\
   \alpha_1&\beta_2+i\gamma_2&\alpha_2&\ddots
   \\
 0
 &\alpha_2&\beta_3+i\gamma_3&\ddots
   \\
 \vdots&\ddots&\ddots&\ddots
    \\
 \end {array} \right]\,.
 \label{vfinkit}
 \ee
Besides our physics-motivated reference to Eq.~(\ref{kukish})
it might make sense to
search also for an additional support of such an ansatz
in mathematically oriented
reviews \cite{Garcia,Garciab,gar-put}.

\subsection{Hermitian limit $\gamma_k \to  0$ and the
criteria of convergence of analytic continued fractions \label{ius}}

In many tridiagonal-matrix models
as sampled, typically, in  \cite{Singh,FP}
the size of
the diagonal matrix elements
$a_k$
happens to grow rather quickly with $k$.
As a consequence,
the proof of convergence of the continued-fraction
expansions of functions $f_k(z)$
gets simplified
in a way
based on an asymptotic rescaling
of the $k-$dependence of the dynamics-specifying matrix
elements $a_k$  and products $\rho_k=b_kc_{k+1}$.
This reduces the analysis of convergence to the study of iterations
of a less $k-$dependent version
of mapping (\ref{cf}). Thus, we may drop the subscripts and write
 \be
 f'=1/(\beta-E-\alpha^2f)\,.
 \label{fpcf}
 \ee
In addition,
when
one recalls
toy model (\ref{kukish}) one finds out that
also the $k-$dependence
of the off-diagonal matrix elements may be assumed smooth so that
it makes sense to
scale it out and fix, say, $\alpha = 1/\sqrt{2}$.

In the same methodical and
heuristic spirit
the quantities $\beta_k$
can be perceived as
real parts of a given discrete, smooth and confining, i.e., asymptotically growing
local potential $V(r_k)$.
At all of the  sufficiently large $k \gg 1$
their growth could be read as relation
$\beta_k\gg |E|$
so that the constant $E$
can
be neglected. This
simplifies the iteration recipe,
 \be
 f'=\frac{2}{2\beta-f}\,.
 \label{ipro}
 \ee
The proof of its convergence
is trivial but instructive,
proceeding in two steps.
In the first one we must find all of the eligible
fixed points  $f=f_{(FP)}$
of the mapping,
i.e., all of the roots of
equation $f_{(FP)}'=f_{(FP)}$.
This yields strictly
two candidates for the fixed point,
 \be
 f_{FP}^{(\pm)}=\beta\pm \sqrt{\beta^2-2}\,.
 \label{tyfps}
 \ee
They must be real because $\beta$ is real
so that the continued fractions
would certainly not converge when $\beta \in (-\sqrt{2},\sqrt{2})$.

In the second step of the proof we must take into account that
the iterations can only
converge
when the iterations
diminish the change, i.e.,
under the stability {\it alias\,} accumulation condition
 \be
 \left |\frac{\partial f'}{\partial f}\right | < 1\,.
\label{krikon}
 \ee
Fortunately, such a criterion is easy to apply since
 \be
 \left .\frac{\partial f'}{\partial f}\right |_{f=f_{FP}^{}}
 =\frac{1}{2}\,
 \left (f_{FP}^{}\right )^2 \,.
 \ee
Thus, the criterion of convergence (\ref{krikon})
is satisfied
near and only near the smaller fixed-point root.
Indeed,
we may set $\beta=\sqrt{2}(1+\delta^2)$ and
get $[f_{FP}^{(\pm)}]^2/2=1 \pm \sqrt{2}|\delta| +{\cal O}(\delta^2)$.
Naturally, the rate of convergence of the iterations
(and, hence, also the rate of convergence of the
related continued fractions)
will grow with the growth of $\delta$ or $\beta$.
Still, even in the domain of
very large $\beta$ the unique stable limiting
value $f_{FP}^{(-)}=1/\beta+ corrections$
will remain
positive.

\section{Singular values of Hamiltonian \label{ch4}}

Let us now return to the full-fledged
complex model~(\ref{vfinkit}) with
$\gamma_k \neq 0$. As long as
its main diagonal is complex,
we have to expect that in general the spectrum becomes complex
as well.
From the point of view of physics the eigenvalues
$E_n \in \mathbb{C}$ may only
represent unstable resonant states.

From a mathematical perspective
the factorization of the resolvent becomes less useful.
In particular,
the elementary version of the fixed-point-based method of
proof of the continued-fraction convergence
as outlined in subsection \ref{ius}
will cease to be applicable.
For all of these reasons it makes sense to replace
the over-ambitious task of the localization of
the energies by the mere determination
of the real and non-negative
singular values.

 %\noindent
One of benefits of such a change of eigenvalue problem
can be seen in a tentative replacement
of the factorization of our tridiagonal
Schr\"{o}dinger operator $H-E$
by an analogous factorization of
its block-tridiagonal partner ${\mathbb H}-E$.

\begin{lemma}
At any
finite or infinite Hilbert-space dimension $N \leq \infty$
the singular values
of our complex symmetric
Hamiltonian matrix (\ref{vfinkit})
can be calculated as eigenvalues  of
an auxiliary block-tridiagonal Hermitian matrix
 \be
 {\mathbb H}=
  \left[ \begin {array}{ccccc}
     A_1&B_1&0
 &\ldots&0
   \\
   C_2&A_2&B_2&\ddots&\vdots
   \\
 0
 &C_3&A_3&\ddots&0
   \\
 \vdots&\ddots&\ddots&\ddots&B_{N-1}
    \\
 0&\ldots &0
 &C_N&A_N
   \\
 \end {array} \right]\,
 \label{matrix}
 \ee
where
 \be
 A_k=
\left (
\begin{array}{cc}
0&\beta_k+i\gamma_k\\
\beta_k-i\gamma_k&0
\ea
\right )\,,\ \ \ \ \
B_k=C_{k+1}=
\left (
\begin{array}{cc}
0&\alpha_k\\
\alpha_k&0
\ea
\right )\,
 \label{[14]}
 \ee
at all $k$.
\end{lemma}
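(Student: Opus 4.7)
The plan is to identify $\mathbb{H}$ with a unitary reordering of the standard Hermitian dilation
$$
\mathcal{H}=\left(\begin{array}{cc} 0 & H \\ H^{*} & 0 \end{array}\right)
$$
of $H$, and then to invoke the elementary fact that the non-negative eigenvalues of $\mathcal{H}$ are precisely the singular values $\sigma_n$ of $H$. This last fact is immediate from $\mathcal{H}^{2}=\mathrm{diag}(HH^{*},H^{*}H)$, whose eigenvalues are the squares $\sigma_n^{2}$, combined with the Hermiticity of $\mathcal{H}$ which forces its spectrum to be $\{\pm\sigma_n\}$. This will be the entry point of the argument.

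Next I will introduce the $2N\times 2N$ permutation $P$ that interleaves the two copies, sending the $k$-th basis vector of the first copy to new position $2k-1$ and the $k$-th basis vector of the second copy to new position $2k$. A direct block-by-block computation of $P\mathcal{H}P^{*}$ then yields the claimed block matrix: the diagonal $2\times 2$ block at position $k$ equals
$$
\left(\begin{array}{cc} 0 & H_{kk} \\ \overline{H_{kk}} & 0 \end{array}\right)=A_k,
$$
while the $2\times 2$ blocks at positions $(k,k+1)$ and $(k+1,k)$ both reduce to the real symmetric matrix with $\alpha_k$ on the antidiagonal, i.e.\ to $B_k=C_{k+1}$. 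At this step the complex symmetry $H=H^{T}$ assumed in~(\ref{vfinkit}) enters essentially: it gives $H_{k+1,k}=H_{k,k+1}=\alpha_k\in\R$, so that $\overline{H_{k,k+1}}=H_{k+1,k}$, which is what forces the super- and sub-diagonal blocks to coincide.

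The tridiagonality of $H$ automatically sends all blocks outside the nearest-neighbor pattern to zero, so $\mathbb{H}$ is genuinely block-tridiagonal. Its Hermiticity is automatic from unitarity of $P$ and Hermiticity of $\mathcal{H}$; it can equally well be verified by hand from $A_k=A_k^{*}$ and $C_{k+1}=B_k^{*}$. The extension to $N=\infty$ needs no new idea: $P$ extends to a unitary isomorphism of $\ell^{2}\oplus\ell^{2}$ onto $\ell^{2}$, under which $\mathcal{H}$ and $\mathbb{H}$ are unitarily equivalent self-adjoint operators with identical spectrum.

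I do not anticipate any genuine obstacle: once the dilation viewpoint is adopted, the whole proof is bookkeeping. The only point requiring mild care is the systematic exploitation of the complex symmetry $H=H^{T}$; without it the same reordering would still produce a Hermitian block-tridiagonal matrix, but with more general $2\times 2$ blocks rather than the clean antidiagonal form of $A_k$ and $B_k$ displayed in~(\ref{[14]}).
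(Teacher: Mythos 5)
Your proposal is correct and follows essentially the same route as the paper: the paper's proof also starts from the Hermitian dilation $\widetilde{\mathbb H}=\left(\begin{smallmatrix}0&H\\H^\dagger&0\end{smallmatrix}\right)$ and obtains $\mathbb H$ by the interleaving renumbering of basis elements (attributed there to Pushnitski and \v{S}tampach). You simply carry out explicitly the block-by-block computation that the paper delegates to the citation, including the correct observation that the complex symmetry $H=H^{T}$ with real $\alpha_k$ is what makes the off-diagonal blocks coincide.
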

\begin{proof}
. The real
(and, necessarily, non-negative) singular
values $\sigma_n$ of $H$
can be, for our present purposes,
defined as
eigenvalues
of the following auxiliary
Hermitian descendant
 \be
  \widetilde{\mathbb H}=
  \left (\begin{array}{cc}
  0&H\\H^\dagger&0
  \end{array}
  \right )\,
  \label{desce}
  \ee
of our non-Hermitian
Hamiltonian.
The spectral equivalence between $\widetilde{\mathbb H}$
and ${\mathbb H}$ is then an immediate consequence
of the
Pushnitski's and \v{S}tampach's
renumbering of the elements of the basis (see Ref.~\cite{PS1}).
\end{proof}

The
triplets of
the separate quantum-dynamics-representing parameters
$\alpha_k$, $\beta_k$ and $\gamma_k$
will be assumed real at all $k$.
Due to the loss of the
Hermiticity of $H$ the spectrum must be
also expected complex in general.
Its immediate phenomenological
applicability
may be sought
in the theory
of open quantum systems and resonances
in which one may often decide to replace the study of
the energies
by the study of the mere singular values.

In
functional analysis
the rigorous definition
of singular values
usually
requires
a compactness of $H$
\cite{SV}.
In our present
study of a class of quantum models
with tridiagonal complex Hamiltonians  of
the rather general form of Eq.~(\ref{ufinkit})
we will proceed in a more pragmatic manner and we will
assume that the
spectrum of $H$ is
discrete
and non-degenerate,
representing, say, an experimentally localizable
set of resonances. In such a setting, the assumption
of boundedness of $H$ would be counterproductive.

\subsection{Matrix continued fractions\label{houska}}

After the replacements $f_k \to F_k$, $a_k \to A_k$, $b_k \to B_k$ and
$c_{k+1} \to C_{k+1}$ of the real numbers by Hermitian two-by-two
matrices in (\ref{ufinkit}),
the
structure of the separate partitioned factors
${\cal U}$,
 ${\cal F}$ and
 ${\cal L}$
remains, after the block-tridiagonal partitioning
of ${\mathbb H}-\sigma$, the same.
The modified
middle factor ${\cal F}$ becomes a block-diagonal matrix
with elements $F_k^{-1}$ which have to be
defined by the
matrix-continued-fraction recurrences
 \be
 F_k=\frac{1}{A_k-\sigma-B_k F_{k+1}C_{k+1}}\,,\ \ \ \
 k=N, N-1,\ldots,2 ,1\,.
 \label{macf}
 \ee
The initial two-by-two matrix $F_{N+1}$ is to be set equal to
zero matrix
(interested readers can check, e.g., papers \cite{GG,GM,mocf,Nex}
for a more detailed reference).

Parameters $\alpha_k$, $\beta_k$ and $\gamma_k$
entering the
two ``input information'' matrices (\ref{[14]})
are real so that at every index $k$
we may
introduce other four real numbers and
reparametrize our MCF matrices,
 \be
 F_k^{-1}=
\left (
\begin{array}{cc}
u_k&x_k+iy_k\\
x_k-i y_k&v_k
\ea
\right )\,,\ \ \ \
 F_k=\frac{1}{u_kv_k-x_k^2-y_k^2}
\left (
\begin{array}{cc}
v_k&-x_k-i y_k\\
-x_k+iy_k&u_k
\ea
\right )\,.
 \label{rMcf}
 \ee
In this notation the two-by-two complex-matrix mapping $F\ (=
F_{k+1})\ \to \ F'\ (=F_k)$ can be reinterpreted as a quadruplet of
mutually coupled scalar maps
 \be
 u'=-\sigma-\alpha^2\frac{u}{uv-x^2-y^2}\,,
 \ \ \ \
 v'=-\sigma-\alpha^2\frac{v}{uv-x^2-y^2}\,,
 \label{[19]}
 \ee
 \be
 x'=\beta+\alpha^2\frac{x}{uv-x^2-y^2}\,,
 \ \ \ \
 y'=\gamma-\alpha^2\frac{y}{uv-x^2-y^2}\,.
 \ee
where we dropped the subscripted index $k$ as inessential.

\subsection{Convergence\label{umius}}

We may
notice that
$v=u$ so that the second mapping in (\ref{[19]})
is redundant. We are
left with the slightly simplified triplet
of iteration recipes
 \be
 u'=-\sigma-\alpha^2\frac{u}{u^2-x^2-y^2}\,,
 \ \ \ \
 x'=\beta+\alpha^2\frac{x}{u^2-x^2-y^2}\,,
 \ \ \ \
 y'=\gamma-\alpha^2\frac{y}{u^2-x^2-y^2}\,
 \label{iterr}
 \ee
with $u_{initial}=-\sigma \leq 0$,
$x_{initial}=\beta$ and $y_{initial}=\gamma$.
Moreover, we see that
 \be
 \gamma\,u=-\sigma\,y
 \label{lire}
 \ee
so that whenever $\sigma \neq 0 \neq \gamma$,
the iterative evaluations of the
sequence of the imaginary MCF components $\ iy$
can be also dropped as superfluous.

%\subsection{The fixed-point analysis of convergence
% in the $\gamma_k\neq 0$ MCF model\label{mius}}
%
%for many Hamiltonians of interest
%for which the asymptotic $k-$dependence of their
%real
%matrix elements $\alpha_k$ and $\beta_k$ still
%remained sufficiently smooth
%and for which the imaginary components
%were assumed absent,
%$\gamma_k=0$,

In subsection \ref{ius} above we demonstrated that
the analysis of convergence of the
analytic continued fraction expansions
can be reduced to the analysis of convergence of
iterations of the map $f \to f'$ sampled
by Eq.~(\ref{ipro}).
We employed there a straightforward
geometric interpretation of the mapping.
Then it was easy to find all of the
fixed points
(i.e., all of the possible accumulation points, cf. Eq.~(\ref{tyfps})).
In addition, it was also easy to
list all of the stable ones.

Now, we intend to apply the same method
to the study of convergence of MCFs (\ref{rMcf}).
The same constructive
philosophy is to be used.
Thus,
we may fix the scale (by choosing, say,
$\alpha=1$) and we have to list,
also in the case of
two-by-two MCF
expansions, all of the real
fixed points such that $u'=u$, $x'=x$ and $y'=y$
in Eqs.~(\ref{iterr}).

%with(grobner);
%
% gsolve({(u+s)*(u^2-x^2-y^2)+u,(x-b)*(u^2-x^2-y^2)-x,(y-g)*(u^2-x^2-y^2)+y},{u,x,y});

\begin{lemma}
For Hamiltonian (\ref{vfinkit})
the asymptotic $k \gg 1$ fixed-point approximation
 \be
 F_k^{-1}=
\left (
\begin{array}{cc}
u_k&x_k+iy_k\\
x_k-i y_k&v_k
\ea
\right )
=
\left (
\begin{array}{cc}
u&x+iy\\
x-i y&v
\ea
\right ) + {\rm corrections}\,,
\ \ \ \ k \gg 1
 \label{FPrMcf}
 \ee
of our auxiliary MCF matrices, if it
exists, can be defined, at a rescaled $\alpha=1$
and with $v=u$, by a real root $u=u_{(FP)}$
of quartic polynomial
 \ben
 P(u)=-4\,{u}^{4}{{\sigma}}^{2}
 +4\,{u}^{4}{{\gamma}}^{2}-8\,{{\sigma}}^{3}{u}^{3}
 +8\,{\sigma}{u}^{3}{{\gamma}}^{2}
 +{{\sigma}}^{2}{u}^{2}{{\beta}}^{2}-5\,{{\sigma}}^{4}{u}^{2}+
   \een
   \be
   +5\,{u}^{2}{{\gamma}}^{2}{{\sigma}}^{2}-4
 \,{{\sigma}}^{2}{u}^{2}+{{\sigma}}^{3}u{{\beta}}^{2}
 -4\,u{{\sigma}}^{3}-{{\sigma}}^{5}u+u{{\gamma}}^{2}{{\sigma}}^{3}-
 {{\sigma}}^{4} %=0
 \,.
 \ee
The value of
$y=y_{(FP)}(u_{(FP)})$ is
specified by Eq.~(\ref{lire})
and the value of $x=x_{(FP)}(u_{(FP)})$ is
given by another closed-form relation
 \be
 {\beta}{{\sigma}}^{3}x-4\,{u}^{3}{{\sigma}}^{2}
 +4\,{u}^{3}{{\gamma}}^{2}-6\,{{\sigma}}^{3}{u}^{2}+6\,{u}
^{2}{{\gamma}}^{2}{\sigma}-4\,u{{\sigma}}^{2}
%+
 %\een
% \be
 +{{\sigma}}^{2}u{{\beta}}^{2}-2\,u{{\sigma}}^{4}
 +2\,u{{\sigma}}^{2}{{\gamma}}^{2}-2\,{{\sigma}}^{3}=0
 \label{[24]}
 \ee
which is linear in $x$ and in which we abbreviated $u=u_{(FP)}$ .
\end{lemma}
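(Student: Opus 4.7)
The plan is to read off the three asymptotic relations directly from the fixed-point conditions $u'=u$, $x'=x$, $y'=y$ applied to the iteration~(\ref{iterr}) with $\alpha$ rescaled to unity and $v=u$. Introducing the common denominator $D := u^2-x^2-y^2$, each of the three scalar recurrences rearranges into a linear identity in $D$:
\[
D(u+\sigma) = -u, \qquad D(x-\beta) = x, \qquad D(\gamma-y) = y.
\]

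First I would eliminate $D$ between pairs of these identities. The pair (first, third) immediately forces $\gamma u + \sigma y = 0$, which is precisely~(\ref{lire}) and yields $y = -\gamma u/\sigma$ whenever $\sigma \neq 0$. The pair (first, second) similarly collapses to the linear-in-$x$ relation $(2u+\sigma)\, x = \beta u$. This already determines $x$ in closed form; the expanded polynomial presentation~(\ref{[24]}) stated in the lemma is an equivalent rewriting, obtained by multiplying through by~$\sigma^3$ and then using the quartic constraint $P(u)=0$ (derived next) to reshape the remaining cross-terms into the form the lemma displays.

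To close the system and pin down $u$ itself, I would substitute the resulting closed-form expressions $y^2 = \gamma^2 u^2/\sigma^2$ and $x^2 = \beta^2 u^2/(2u+\sigma)^2$ back into the defining equality $u^2 - x^2 - y^2 = -u/(u+\sigma)$ supplied by the first $D$-identity. Dividing by~$u$ (the trivial branch $u=0$ is inspected separately) and clearing denominators by the common factor $\sigma^2(2u+\sigma)^2(u+\sigma)$ produces a single polynomial equation in $u$ whose total degree in $u$ is four. Collecting monomials then reproduces, up to an overall sign, precisely the quartic $P(u)$ stated in the lemma, and the three claims are established.

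The only obstacle is the final bookkeeping: after clearing denominators one has a substantial number of monomials in $u,\sigma,\beta,\gamma$ to cancel and collect, and the match with the exact polynomial $P(u)$ and with the equivalent form~(\ref{[24]}) is most safely verified with a symbolic-algebra check. Apart from this, the argument is purely elementary polynomial manipulation---no convergence, compactness, or operator-theoretic input enters at this stage---and the degenerate branches ($\sigma=0$, $\beta=0$, $u=0$, or $u=-\sigma$) are easily handled by returning directly to~(\ref{iterr}).
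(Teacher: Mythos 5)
Your proposal is correct, and it takes a genuinely more explicit route than the paper's own proof, which disposes of the lemma in a single sentence by appealing to a computer-algebra (Gr\"{o}bner-basis) elimination of $x_{(FP)}$. Your hand elimination of the common denominator $D=u^2-x^2-y^2$ between pairs of fixed-point identities is sound: the pair $(u,y)$ reproduces (\ref{lire}), the pair $(u,x)$ gives the compact rational form $(2u+\sigma)\,x=\beta u$, and back-substitution of $y^2=\gamma^2u^2/\sigma^2$ and $x^2=\beta^2u^2/(2u+\sigma)^2$ into $D=-u/(u+\sigma)$, after clearing the denominators $\sigma^2(2u+\sigma)^2(u+\sigma)$, yields exactly $-P(u)=0$; I have checked the monomial bookkeeping and it matches term by term. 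Your claim that (\ref{[24]}) is equivalent to $(2u+\sigma)x=\beta u$ modulo $P(u)=0$ also holds: writing (\ref{[24]}) as $\beta\sigma^3x+Q(u)=0$, one has the polynomial identity
\[
(2u+\sigma)\bigl[\beta\sigma^3x+Q(u)\bigr]=\beta\sigma^3\bigl[(2u+\sigma)x-\beta u\bigr]+2P(u),
\]
so the two relations agree wherever $\beta\sigma^3(2u+\sigma)\neq 0$. What your route buys is transparency and two structurally simpler closed forms, $y=-\gamma u/\sigma$ and $x=\beta u/(2u+\sigma)$, in place of the bulkier (\ref{[24]}); what the paper's Gr\"{o}bner route buys is automation and a uniform treatment of the degenerate loci ($u=0$, $2u+\sigma=0$, $u=-\sigma$, $\sigma=0$) that you must, as you correctly note, inspect separately by returning to (\ref{iterr}).
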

\begin{proof}
is based on the standard
elimination of $x=x_{(FP)}$ using the
concept of the so called Gr\"{o}bner basis.
\end{proof}

\subsection{Illustrative examples\label{ch5}}

A truly remarkable feature of
the complex symmetric matrix model (\ref{vfinkit})
is that all of the fixed points of the related MCF
mapping $F \to F'$ can be defined via quartic
algebraic equation $P(u_{(FP)})=0$,
i.e., in closed form,
in principle at least.
One should be surprised by
the elementary non-numerical nature of such a result.
At the same time,
it makes probably no sense to offer here
a routine description and/or an exhaustive discussion
of the explicit multiparametric criteria of convergence.
We believe that
for our present
purposes it will be sufficient to
pick up just a few illustrative examples
using a (more or less arbitrary) set of
preselected parameters.

\begin{table}[th]
\caption{Sample of the quick numerical MCF
convergence. Mappings (\ref{iterr}) are iterated using constant,
$k-$independent
parameters $\alpha=\sigma=1$, $\beta=4$ and $\gamma=1/2$.
 } \label{pexp3b}
\begin{center}
\begin{tabular}{||c|ccc||}
\hline \hline
  iteration &
  \multicolumn{3}{c||}{\rm MCF element }\\
  & u& x& y\\
 \hline \hline
                    0      &   -1.000000000
                           &  4.000000000
  &                           0.5000000000\\
                     1     &   -1.065573770
                           &  3.737704918
  &                           0.5327868852\\
  2 &                          -1.081224617
    &                         3.715089035
     &                        0.5406123086\\
   3   &                       -1.083653085
       &                      3.712567903
        &                     0.5418265425\\
    4     &                    -1.083988278
          &                   3.712258295
           &                  0.5419941392\\
     5       &                 -1.084032778
             &                3.712218864
              &               0.5420163892\\
      6         &              -1.084038607
                &             3.712213775
                 &            0.5420193033\\
       7           &           -1.084039366
                  &          3.712213115
                    &         0.5420196832\\
        8           &        -1.084039465
                      &       3.712213029
                      &      0.5420197326\\
         9               &     -1.084039478
                         &    3.712213018
                          &   0.5420197391\\
          10                 &  -1.084039480
                            & 3.712213017
                             &0.5420197399\\
                  %  \vdots           & & &\\
  \hline \hline
\end{tabular}
\end{center}
\end{table}
%
% ep:=1.;
% ga:=1./2;
% be:=4.;
%
%
% u:=-ep:
% x:=be:
% y:=ga:
%
% mu:=1./(u^2-x^2-y^2):
%
% up:=-ep-u*mu:
% xp:=be+x*mu:
% yp:=ga-y*mu:
%
% up;
% xp;
% yp;
%
% u:=up:
% x:=xp:
% y:=yp:
%
%
% mu:=1./(u^2-x^2-y^2):
%
% up:=-ep-u*mu:
% xp:=be+x*mu:
% yp:=ga-y*mu:
%
% up;
% xp;
% yp;
%
% u:=up:
% x:=xp:
% y:=yp:
%

Let such an {\it ad hoc\,} choice be
 \be
 {\sigma}=1\,,\ \ \ {\gamma}=1/2\,,\ \ \ {\beta}=4\,.
 \label{spedi}
 \ee
Then, equation $P(u)=0$ yields the following four
exact fixed-point roots
 $$
 u_{(\pm,\pm)}=-\frac{1}{2}\, \pm \frac{1}{12}
 \,\sqrt {306 \pm 6\,\sqrt {1833}}$$
%
%-1/2-1/12*(306-6*1833^(1/2))^(1/2), -1/2+1/12*(306-6*1833^(1/2))^(1/2),
%-1/2-1/12*(306+6*1833^(1/2))^(1/2), -1/2+1/12*(306+6*1833^(1/2))^(1/2)
%
i.e., numerically, a quadruplet of real numbers
 $$
 \{-2.477093292, -1.084039480, 0.08403948007, 1.477093292\}\,.
 $$
The stable point of accumulation
can be shown to be the second one in the list.
This is confirmed by Table \ref{pexp3b} where
we can see that
for our choice (\ref{spedi}) of specific dynamical-input parameters
even the practical numerical
rate of convergence of our MCF-simulating iterations
of mappings (\ref{iterr}) is fairly quick.

The latter choice has intuitively been supported by the
expectation that the well-behaved Hamiltonian $H$ should
be dominated by its main diagonal.
Unfortunately, such a form of intuition is by far not
reliable.
For a proof it is sufficient to make just a
minor modification of $H$ and choose
 \be
 {\sigma}=1\,,\ \ \ {\gamma}=1/2\,,\ \ \ {\beta}=2\,.
 \label{expedi}
 \ee
%
%${\sigma}=1$, ${\gamma}=1/2$ and ${\beta}=2$.
In such a case we will have to deal with another quartic
fixed-point-determining polynomial
$P(u)=4+12\,{u}^{4}+24\,{u}^{3}+15\,{u}^{2}+3\,u$
which is strictly positive.
All of its roots
are of the closed form again,
 \be
 u^{(FP)}_{(\pm,\pm)}=-\frac{1}{2} \pm \frac{1}{12}
 \,\sqrt {18\pm 6\,i\sqrt {183}}
 \ee
but they are all complex,
 \be
 \{-1.092600316 \pm 0.4755787365\,i, 0.09260031606 \pm 0.4755787365\,i
 %,
% -1.092600316+0.4755787365*I, 0.09260031606-0.4755787365*I
 \}\,.
 \ee
Thus,
the iterations of mappings (\ref{iterr})
do not converge and may be shown to
yield only irregular and oscillatory
(i.e., especially for large $N \gg 1$, absolutely useless)
numerical results.

Obviously, even the
methodically motivated
comparison of the utterly different consequences of
assumptions (\ref{spedi})
and (\ref{expedi})
indicates that
the researchers which would wish to implement the method
would have to expect the emergence of several challenges in practice.
Among them,
the possible slow-down of the
rate of convergence (i.e., an increase
of the computational time)
would still remain to be just a
minor one. Indeed, due to
the rather elementary matrix
form of our present
Hamiltonians (\ref{ufinkit}), all of
the related numerical tests would still remain quick and conclusive
in virtually all
(viz., convergent, slowly convergent or divergent)
dynamical regimes.
The reason is that the number of parameters
in the underlying continued-fraction mappings
would not still be too large.
In this sense,
the true technical obstacles
could only be expected to emerge after
a generalization of the approach from the tridiagonal
models (\ref{ufinkit})
to their various
(i.e., typically, complex block-tridiagonal and manifestly asymmetric) generalizations
(cf., e.g., \cite{Nimrodd}).
Naturally, the study of these generalizations would
already lie far beyond the scope of our present paper.

\section{Discussion\label{ch6}}

The appeal of tridiagonality
ranges from its
numerical merits
to various analytic aspects and consequences.
In the former context
people often pre-tridiagonalize  general
matrices before they start searching, say, for their
eigenvalues. In the opposite, strictly non-numerical
and analytic extreme
the tridiagonality of certain matrices
plays a key role in the
abstract theory of orthogonal polynomials \cite{OgPol}.

Such a split of roles
survives when
one moves to the matrices and operators with complex spectra.
Still, the solution of complex discrete
Schr\"{o}dinger equations
need not be an easy or routine numerical task.
Similarly, challenges are also encountered
in a strictly analytic framework where, for example,
the authors of paper \cite{PS2}
discovered that
a transition to the complex symmetric
matrices of coefficients can lead to
an innovative
notion
of ``anti-orthogonal polynomials''
(their explicit illustrative Chebyshev-like
sample can be found in section Nr. 5.6 of {\it loc. cit.}).

Similar papers are helping to bridge the gap between numerical and
analytic points of view.
Opening the way towards
generalizations in which, for example,
the real or complex matrix elements become
replaced by the  real or complex $M$ by $M$ submatrices.
In this framework it would be also possible to
cover the systems
in which the tridiagonal matrix structure is modified:
{\it Pars pro toto\,} let us mention paper \cite{Nimrodd}
in which the
authors considered the direct sums of tridiagonal-matrix
operators
with applicability connected, e.g., with
the study of the so called squeezed or bi-squeezed
states.

In such a setting we have pointed out here that a
promising constructive tool
can be still found in the matrix continued fractions.
In the context of physics, one of the earliest physical
applications of MCFs
has been proposed
by Graffi and Grecchi \cite{GG}.
These authors decided to study the real and symmetric
pentadiagonal
(i.e., $M=2$)
Schr\"{o}dinger operator (\ref{finkit}) which
represented the most common (viz., quartic)
anharmonic-oscillator differential-operator
Hamiltonian
 $$
 H^{(QAO)}(g)= p^2 + q^2 + g\,q^4
 $$
in conventional harmonic-oscillator basis.
In a purely numerical setting
they noticed that in the infinite-dimensional-matrix
limit (with $N \to \infty$ in our present notation)
their two-by-two MCF expansion
defined by recurrences (\ref{cf}) proved convergent.

They also observed that a reasonably efficient
numerical search for the energy eigenvalues
can be based on the search of the zeros of
another Green's-function secular equation
of the form
  \be
  \det F^{-1}_1(E)=0\,.
  \label{effeq}
  \ee
From a purely pragmatic perspective, unfortunately,
the performance of the  Graffi's and Grecchi's
innovative MCF-based approach
did not exceed the efficiency of
several other, more standard numerical techniques.
One of the reasons was that
for their particular illustrative toy-model Hamiltonian $H^{(QAO)}(g)$
the rate of the $N \to \infty$ convergence of the MCF expansion of the
two-by-two
matrix $F^{-1}_1(E)$
in Eq.~(\ref{effeq}) happened to be
slow.

Naturally, this
did not imply a weakness of the MCF method itself.
For another family of certain
Fourier-symmetric anharmonic-oscillator
models $H^{(FSAO)}$ of paper \cite{SAO},
indeed, a much quicker rate of the practical numerical
convergence
has been achieved. It was also analytically proved there
for a virtually arbitrarily large
MCF dimension $M$.

Much less success has been achieved
in the less numerical quantum-physics-oriented
applications of the abstract mathematical formalism of MCFs.
In this sense, our present results can be perceived as encouraging.
For two reasons. Firstly, at least some of
the technical problems emerging in connection
with the use of complex (and, at the same time,
not necessarily Hermitian) tridiagonal Hamiltonians
were revealed to find solution in a transition to their
suitable non-tridiagonal
(i.e., in our present models, block-tridiagonal) partners.
Secondly, one should really be surprised by the survival
of feasibility of
technical analysis and, in particular, by the
efficiency of the specific fixed-point-based
proofs of the underlying continued-fraction convergence.

New perspectives are opening in several closely
related areas of research in
quantum physics (say, of unstable states) and in
complex analysis (and, in particular, in
mathematics of MCFs).
It is probably time to agree that
the current progress in the related functional analysis
(cf., once more, papers \cite{PS1,PS2})
seems to be paralleled by
the current progress in our understanding of
Schr\"{o}dinger equations with
local but complex potentials $V(x)$
(in this respect one just has to return, once again, to
their discretization via
Eq.~(\ref{vfinkit})).

\section*{Acknowledgement}

The author was supported by
Faculty of Science of
University of Hradec Kr\'{a}lov\'{e}

\newpage


\begin{thebibliography}{999}


\bibitem{Marcelo}
F. M. Fern\'{a}ndez,
On a class of non-Hermitian Hamiltonians with tridiagonal matrix representation.
Ann. Phys. (NY) 443 (2022) 169008


\bibitem{Geyer}
F. G. Scholtz, H. B. Geyer and F. J. W. Hahne,
Quasi-Hermitian operators
in quantum mechanics and the variational principle.
Ann. Phys. (NY) 213  (1992) 74.

\bibitem{Messiah}
A. Messiah, Quantum Mechanics.
North Holland, Amsterdam, the Netherlands, 1961.


\bibitem{Carl}
C. M. Bender,  Making sense of nonhermitian Hamiltonians.
Rep. Prog. Phys. 70 (2007) 947 - 1018.
%%%(hep-th/0703096);




\bibitem{ali}
A. Mostafazadeh, Pseudo Hermitian representation of quantum
mechanics.
Int. J. Geom. Meth. Mod. Phys. 7 (2010) 1191 - 1306.

\bibitem{book}
F. Bagarello, J.-P. Gazeau, F. Szafraniec and M. Znojil, Eds.,
Non-Selfadjoint Operators in Quantum Physics: Mathematical Aspects.
Wiley, Hoboken, 2015.

\bibitem{Nimrod}
N. Moiseyev, Non-Hermitian Quantum Mechanics.
Cambridge, CUP, 2011.



\bibitem{Nimrodb}

A. D. Dente, R. A. Bustos-Mar\'{u}n,  H. M. Pastawski,
Dynamical regimes of a quantum SWAP gate beyond the Fermi golden rule.
%Axel D. Dente1, Raúl A. Bustos-Marún1,2, and Horacio M. Pastawski1
Phys. Rev. A 78 (2008) 062116.
% – Published 31 December, 2008
%DOI: https://doi.org/10.1103/PhysRevA.78.062116

\bibitem{Nimrodc}
S. Garmon, G. Ordonez, K. Noba,
Weak-coupling bound states in semi-infinite topological waveguide QED.
Preprint
arXiv:2503.07211.
 %Savannah Garmon
% Department of Physics, Osaka Metropolitan University, Gakuen-cho 1-1, Sakai 599-8531, Japan and
% Institute of Industrial Science, University of Tokyo, Kashiwa 277-8574, Japan
% Gonzalo Ordonez
% Department of Physics and Astronomy, Butler University,
% Gallahue Hall, 4600 Sunset Avenue, Indianapolis, Indiana 46208, USA
% Kenichi Noba
% Department of Physics, Osaka Metropolitan University, Gakuen-cho 1-1, Sakai 599-8531, Japan
% f cavity quantum electrodynamics is the existence of atom-photon bound states,

\bibitem{SV}
https://en.wikipedia.org/wiki/Singular\_value

\bibitem{PS1}
A. Pushnitski, F. \v{S}tampach,
An inverse spectral problem for non-self-adjoint Jacobi matrices.
Int. Math. Res. Notices 2024 (2024)  6106 - 6139.
%
%, Volume 2024, Issue 7, April 2024, Pages 6106–6139,
%https://doi.org/10.1093/imrn/rnad314
%Published: 11 January 2024


\bibitem{Garcia}
S. R. Garcia, E. Prodan, M. Putinar,
Mathematical and physical aspects of complex symmetric operators.
J. Phys. A: Math. Theor. 47
%}, 35
(2014) 353001.

\bibitem{Garciab}
S. R. Garcia, M. Putinar,
Complex symmetric operators and applications.
Trans. Amer. Math. Soc. 358
%}, 3
(2006) 1285 - 1315.


\bibitem{gar-put}
S. R. Garcia, M. Putinar,
Complex symmetric operators and applications II.
Trans. Amer. Math. Soc. 359
%}, 8
(2007), 3913 - 3931.



\bibitem{Akhiezer}
N. I. Akhiezer,
The classical moment problem and some related questions in
  analysis.
  %, vol.~82 of {\em Classics in Applied Mathematics}.
 % Society for Industrial and Applied Mathematics (
 SIAM, Philadelphia,
  PA, 2021.

  \bibitem{Haydock}
  R. Haydock,
  The inverse of a linear operator.
   J. Phys. A: Math., Nucl. and Gen. 7 (1974) 2120.

\bibitem{Singh}
V. Singh,
S. N. Biswas and K. Datta,
Anharmonic oscillator and the analytic theory of continued fractions.
Phys. Rev. D 18 (1978) 1901.
% DOI: https://doi.org/10.1103/PhysRevD.18.1901

\bibitem{Ushveridze}
A. G. Ushveridze, Quasi-Exactly Solvable Models in Quantum Mechanics.
 IOPP, Bristol,
1994.

\bibitem{FP}
M. Znojil,
 Fixed point perturbation theory and the potential
    $r^2 + \lambda r^2/(1 + g r^2).$ I.
    Analysis of convergence.
    J. Phys. A: Math. Gen. 17 (1984) 3441 - 3448.

\bibitem{grid}
J. Stoer, R. Bulirsch, Introduction to Numerical Analysis. Springer-Verlag,
New York, 1980.


\bibitem{GG}
S. Graffi,
V. Grecchi,
A matrix continued-fraction solution for the anharmonic-oscillator eigenvalues.
Lett. Nuovo Cimento 12 (1975)
%, issue 11, pages
425 - 431.
%DOI:10.1007/bf02782232

\bibitem{GM}
M. Znojil,
Generalized method of a resolvent operator expansion.
J. Math. Phys. 18 (1977) 717 - 719.
%https://doi.org/10.1063/1.523333

\bibitem{mocf}
P. Giannozzi,
G. Grosso,
S. Moroni,
G. P. Parravicini,
The ordinary and matrix continued fractions
in the theoretical analysis of Hermitian and relaxation operators.
Appl. Numer. Math. 4 (1988)
%, issue 2-4, pages
273 - 295.
%Publication date: 1988-06-01
%DOI:10.1016/0168-9274(83)90007-7


\bibitem{Nex}
C. M. M. Nex,
The block Lanczos algorithm and the calculation of matrix resolvents.
Comp. Phys. Comm. 53 (1989)
%, issue 1-3, pages
141 - 146.
%DOI:10.1016/0010-4655(89)90155-0


\bibitem{Nimrodd}
F. Bagarello, F. Gargano, F. Roccati,
Tridiagonality, supersymmetry and non self-adjoint Hamiltonians,
J. Phys. A: Math. Theor. 52 (2019) 355203.


\bibitem{OgPol}
NIST Digital Library of Mathematical Functions: http://dlmf.nist.gov/18

\bibitem{PS2}
A. Pushnitski, F. \v{S}tampach,
A functional model and tridiagonalisation for symmetric anti-linear operators.
J. Oper. Theory, to appear
(arXiv preprint arXiv:2402.01237).

\bibitem{SAO}
M. Znojil,
 Symmetrically anharmonic oscillators.
    Phys. Rev. D 24 (1981) 903.

\end{thebibliography}
\end{document}